
\documentclass{article}
\title{Data-driven memory-dependent abstractions of dynamical systems}
\author{Adrien Banse
	\thanks{Email address: \texttt{adrien.banse@uclouvain.be}.}
	\and Licio Romao
	\and Alessandro Abate
	\and Raphaël M. Jungers
	\thanks{R. M. Jungers is a FNRS honorary Research Associate. This project has received funding from the European Research Council (ERC) under the European Union’s Horizon 2020 research and innovation program under grant agreement No 864017 - L2C. R. M. Jungers is also supported by the Walloon Region and the Innoviris Foundation.}
	\thanks{Adrien Banse and Raphaël M. Jungers are with ICTEAM, UCLouvain, and Licio Romao and Alessandro Abate are with the Department of Computer Science, Oxford University.}
}

\usepackage{amsmath}
\usepackage{amsthm}









\usepackage{graphicx}
\usepackage{subcaption}
\usepackage{etoolbox}


\newtheorem{proposition}{Proposition}
\newtheorem{theorem}{Theorem}

\theoremstyle{definition}

\newtheorem{assumptionx}{Assumption}

\newtheorem{example}{Example}
\newtheorem{definition}{Definition}


%
%
%
%
%
%
%
%
%
%
%
%
%

\graphicspath{./Figures/}

\usepackage{accents}
\usepackage{algorithm}
\usepackage{algpseudocode}
\usepackage[colorinlistoftodos]{todonotes}


\newcommand{\samplings}{N}




\newcommand{\IfCustom}[2][true]{
\ifthenelse{\equal{#1}{true}}
{#2}
{}
}








\newcommand{\Alphabet}{\ensuremath{\mathcal{A}}} 
\newcommand{\StarAlphabet}{\ensuremath{\Alphabet^\star}}

\newcommand{\System}{\ensuremath{\Sigma}}

\newcommand{\StochasticKernel}{\ensuremath{T}}
\newcommand{\OutputMap}{\ensuremath{H}}


\newcommand{\MemoryState}[1]{\ensuremath{\mathcal{S}_{#1}}}
\newcommand{\MemoryTransition}[1]{\ensuremath{P_{#1}}}
\newcommand{\MemoryOutput}[1]{\ensuremath{H_{#1}}}
\newcommand{\MemoryInitialDist}[1]{\ensuremath{\nu_{#1}}}
\newcommand{\MemoryBehaviour}[1]{\ensuremath{\mathcal{B}(\System_{#1})}}
\newcommand{\MemoryProbBehaviour}[1]{\ensuremath{\mathbb{Q}_{\MemoryBehaviour{#1}}}}
\newcommand{\MemoryMC}[1]{(\MemoryState{#1},\MemoryTransition{#1},\MemoryInitialDist{#1},\MemoryOutput{#1})}

\usepackage{amssymb}
\usepackage{fullpage}
\usepackage{cite}
\usepackage{hyperref}

\let\citep\cite

\begin{document}
	\maketitle
	\begin{abstract}
		 We propose a sample-based, sequential method to abstract a (potentially black-box) dynamical system with a sequence of memory-dependent Markov chains of increasing size. We show that this approximation allows to alleviating a correlation bias that has been observed in sample-based abstractions. We further propose a methodology to detect on the fly the memory length resulting in an abstraction with sufficient accuracy. We prove that under reasonable assumptions, the method converges to a sound abstraction in some precise sense, and we showcase it on two case  studies.\\
		 \textit{Keywords: }dynamical models, switched systems, finite abstractions, memory, ergodicity, observability.
	\end{abstract}
	
	\section{Introduction}
\label{sec:intro}


Safety-critical applications, such as autonomous vehicles, traffic control, and space systems, require the control designer to enforce rich temporal properties on trajectories of complex models \citep{LS11}. 
A renown approach to address this overall goal relies on abstractions~\citep{BK:08}, whereby a finite-state machine (also known as "symbolic model") approximates the behaviour of the original (a.k.a. "concrete") system that, instead, evolves in a continuous (or even hybrid) state space. Formal verification and correct-by-design synthesis frameworks have been developed by defining mathematical relationships between the finite-state machine and the original dynamics, such as alternating simulation relations \citep{Van:04,Tab09,RWR17,MOS20}.

Despite the success of abstraction methods, most of the existing techniques rely on full knowledge of the underlying dynamical system \citep{ZMKA18,MAA19,MMS20}. This may hamper applicability of these methods when the model is too complex or when it cannot be fully built. For this reason, data-driven methods are gaining popularity  \citep{LLACK:21,SLSZ21,RWEJ21,WJ21,KMSSW22,coppola_et_al_2022,BRAPPSJ23}. 
In order to generate data-driven abstractions, a common approach consists in sampling the initial condition and observing trajectories of a fixed length that unfold from the sampled points, as in \cite{DSA:21}. Alternative approaches consist in combining backward reachable-set computations and scenario optimization to generate, with a given confidence level, an abstract interval Markov chain \citep{BRAPPSJ23}, or in representing noisy dynamics with non-deterministic/probabilistic abstractions \citep{LAB:15}.

\begin{figure}[h!]
  \begin{minipage}[c]{0.64\textwidth}
    \includegraphics[width=0.9\textwidth]{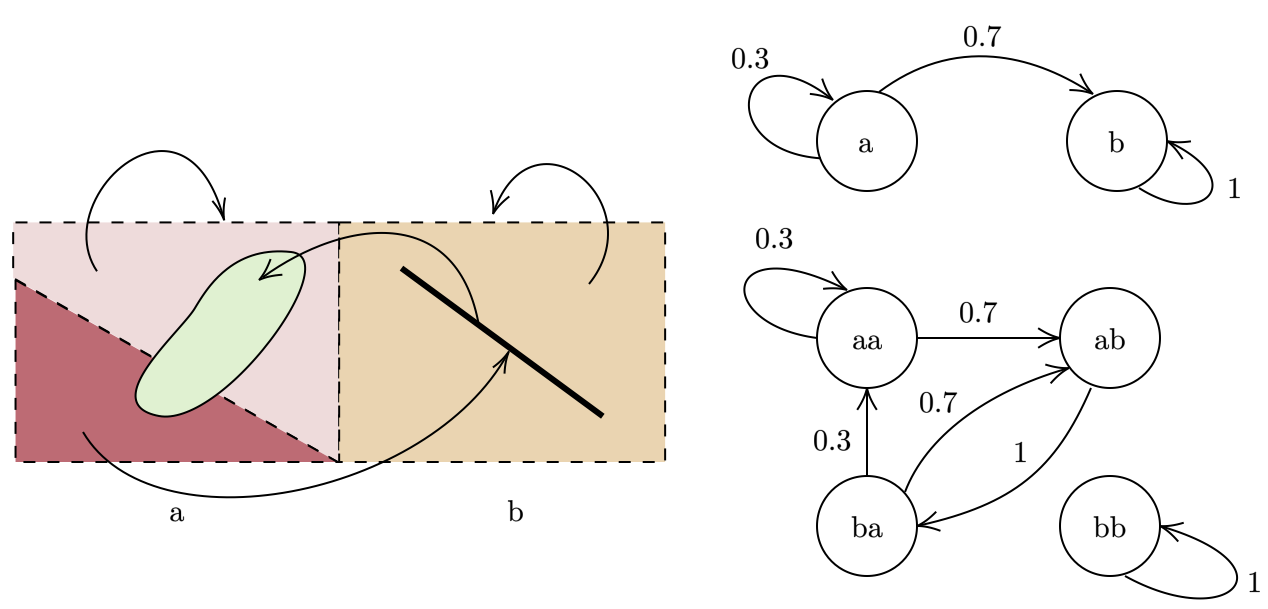}
  \end{minipage} \hfill
  \begin{minipage}[c]{0.36\textwidth}
	\caption{
 (Left) Pictorial representation of a discrete-time dynamical system. The state-space is partitioned into two cells (labelled $a$ and $b$) and allowable transitions are indicated by the arrows. (Right) Illustration of two possible abstractions, a memory-1 and a memory-2 one. 
    } \label{fig:NonMarkovExample}
  \end{minipage}
\end{figure}

Building Markov chain abstractions of dynamical systems must be made with care, as this process may introduce properties in the abstraction that were not present in the original dynamics. As an example of this phenomenon, consider the pictorial discrete-time dynamical system in Figure \ref{fig:NonMarkovExample} and a partition of its state space into two cells corresponding to labels $a$ and $b$. All initial states from the light-red region of $a$ are mapped into the same region, as depicted by the self-loop, and all states in the dark-red region are mapped into a measure-zero subset of $b$ -- represented by the black line segment contained in $b$. Initial states at the 
yellow region of $b$ are mapped into the same region, and points in the line segment are mapped back into partition $a$. 

On the top-right corner of Figure \ref{fig:NonMarkovExample} we illustrate an abstraction obtained by sampling initial conditions from a known distribution and using the frequencies of the different transitions to compute the probabilities shown on the edges; notice that we associate one state per element of the partition.
Using the obtained abstraction to infer transitions of our dynamics leads to erroneous conclusions. First, observe that words $abb$ or $aabb$ may happen with non-zero probability on the abstraction but are, in fact, not valid trajectories of the original dynamics since each $ab$ must necessarily be followed by an $a$. We call these words \emph{spurious.} Notice also that the abstraction on the top-right corner of Figure \ref{fig:NonMarkovExample} does not represent all allowable words. For instance, the word $aba$ is not allowed in the abstraction, despite it being a valid word in the original dynamics. We call such words \emph{missing.}

In this paper, we propose a new, sequential approach to build abstractions, where the uncertainty raising from the abstraction step is quantified probabilistically. 
Such an approach entails turning \emph{epistemic uncertainty} about the dynamics into
\emph{aleatoric uncertainty} represented by transition probabilities of the Markov chain, a feature we believe
to be unique to our strategy, as far as abstraction of dynamical systems is concerned. 
By handling abstract probabilistic models, 
we can analyse the convergence of the probabilistic behaviours, as the abstraction precision increases, and thus can heuristically estimate the error associated to our models.  
To illustrate how memory can increase the expressiveness power of the generated abstraction, let us return to the pictorial dynamics of Figure 1 and let us observe the memory-2 Markov model given at the bottom-right corner on the same figure. Due to the memory, our abstraction can
now capture all possible words associated to the dynamics and, as opposed to the memory-1 model, does not possess spurious words. 

We prove below that, under some reasonable assumption, our abstraction procedure converges to the original system in some sense. We show on numerical examples that the technique works well even when the assumptions are not satisfied.
We finally add that leveraging memory to improve the description of a dynamical system has been
largely explored in different fields of mathematics, engineering, and computer science (see, e.g. \cite{BYG17,mccallum1996reinforcement,SR13,FdOOP16}). In particular, \cite{coppola_et_al_2022} recently proposed it in a non-probabilistic setting. We show here that adding memory is crucial for the construction of probabilistic data-driven abstractions.

This paper is organised as follows. In Section \ref{sec:setting} we introduce the setting and describe our data-driven abstraction procedure. In Section \ref{sec:results} we first prove our theoretical results, and then provide numerical experiments in Section~\ref{sec:experiments}. We then briefly conclude in Section \ref{sec:conclusion}.


	\section{Definitions and methodology}
\label{sec:setting}

\subsection{Definitions}

Consider the discrete-time stochastic system given by 
\begin{equation}
    \System(\nu) = \left\{ 
    \begin{matrix}
        x_{k+1} \sim \StochasticKernel(\cdot | x_k) \\
        y_k = \OutputMap(x_k) \\  
            x_0 \sim \nu,
    \end{matrix}  
    \right.
    \label{eq:System}
\end{equation}
where $x_k \in X \subset \mathbb{R}^n$ is the state variable, $\nu$ is an initial distribution from which the initial state is sampled, $\StochasticKernel(\cdot | \cdot): X \to \mathcal{P}(X) : x_k \mapsto T(\cdot | x_k)$ is a mapping from $X$ to the set of probability measures on $X$, and  $\OutputMap: X \to \Alphabet$, with $\Alphabet$ being a finite set of \emph{outputs}.
If the stochastic kernel $\StochasticKernel$ maps to a Dirac distribution, and the initial distribution $\nu$ are Dirac distributions 
we say that the system is \emph{deterministic,} and we rewrite the first line of \ref{eq:System} into $x_{k+1}=F(x_k) $ for the sake of clarity.



\begin{assumptionx}[Measurability] The mapping $T:X \to \mathcal{P}(X)$ is such that, for any $A \subset X$ the function 
    $g(x) = \int_{A} T(d\xi|x)$     
    is integrable with respect to any measure on $X$, that is, the integral 
     $\int_{A} g(\xi) \mu(d\xi)$ is properly defined for any measure $\mu \in \mathcal{P}(X)$. 
    \label{assump:Well-posedness-Stoch-Sys}
\end{assumptionx}
Assumption \ref{assump:Well-posedness-Stoch-Sys} is a standard technical requirement that enables one to assign probabilities to (sets of) trajectories\footnote{For a complete measure-theoretical description of system \eqref{eq:System} we refer the reader to \cite{Sal:16,Tao11} and Chapters 2-7 of \cite{Rud70}.} generated by the stochastic dynamical system \eqref{eq:System}. The semantics of the dynamical system is denoted as follows: given an initial state $x_0 \sim \nu$, at any time index $k$ the next state $x_{k+1}$ is defined by sampling according to the probability measure defined by the mapping $T$, conditional on the current state $x_k$. Such semantics are known as \emph{stochastic hybrid systems} \citep{APLS08}.

The output map $\OutputMap$ induces a partition on the state space as follows. Let $\Alphabet = \{y_1,\ldots,y_M\}$ and consider the equivalent relation on $\mathbb{R}^n$ given by $x \approx x'$ if and only if $\OutputMap(x) = \OutputMap(x')$. Denote the equivalence classes associated with each element of $\Alphabet$ by $[y_j]$, $j =1, \ldots, M$, i.e.,
\[
[y_j] = \{x \in \mathbb{R}^n : H(x) = y_j  \}.
\]

\begin{definition}[Probabilities on the states]  \label{def-prob-on-states}    
For any $k \in \mathbb{N}$, consider the set $A = A_0 \times A_1 \times \ldots \times A_L$, where $A_j \subset {\mathbb{R}^n}$ for all $j \leq k$, and a probability measure $\mu \in \mathcal{P}(\mathbb{R}^n)$ be given. The dynamical system \eqref{eq:System} induces a measure on $\prod_{i = 1}^{L+1} \mathbb{R}^n$ that is given by
\[
\mathbb{Q}_L (A) = \int_{A_0} \ldots \int_{A_L} \prod_{j = 1}^L T(dx_j|x_{j-1}) \mu (d x_0).
\]
\end{definition}

\begin{definition}[Probabilities on the equivalence classes] For any $L \in \mathbb{N}$, let  $\mathbb{Q}_L$ be defined as in Definition \ref{def-prob-on-states} and let $y = (y_0,y_1,\ldots,y_L)$, where $y_j \in \Alphabet$, $j = 0,\ldots,L$, be the output of the dynamical system given in \eqref{eq:System}. Let $A = [y_0] \times [y_1] \times \ldots \times [y_L]$ be the set associated with the word $y$. Then, the probability of such a word is given by $\mathbb{Q}_L(A)$.
\label{def-prob-on-eq-class}
\end{definition}
The measures in definitions \ref{def-prob-on-states} and \ref{def-prob-on-eq-class} are well defined thanks to Assumption \ref{assump:Well-posedness-Stoch-Sys}, which ensures that all the nested integrals are well-defined. Again we refer the reader to the textbooks mentioned above for the complete theoretical framework. Next, we formally introduce the set of trajectories that can be observed with probability larger than zero for the system \eqref{eq:System}, which is also referred to as the behaviour of \eqref{eq:System} (see \cite{willemsBehaviour} for more details).

\begin{definition}[Behaviour] Consider the dynamical system in  \eqref{eq:System}, let $\StarAlphabet$ be the countable Cartesian product of $\Alphabet$ and let $\Alphabet^L$ be the $L$-fold Cartesian product of $\Alphabet$. Then we have that:
\begin{itemize}
    \item The \emph{behaviour} of system \eqref{eq:System}, denoted by $\mathcal{B}(\System)$, is the subset of $\StarAlphabet$ defined as $
    \mathcal{B}(\System) = \{y \in \StarAlphabet: \mathbb{Q}(y) > 0\}$,
    where $\mathbb{Q}$ is the unique measure induced by \eqref{eq:System} in the space\footnote{This construction can be made rigorous using adequate measure-theoretic results that we omit for brevity, however see \cite{Tao11} for more details. } $\StarAlphabet$.
    \item The $L$-th step behaviour of the dynamical system \eqref{eq:System}, denoted by $\mathcal{B}_L(\System)$, is a subset of $\Alphabet^{L+1}$ defined as 
    $\mathcal{B}_L(\System) = \{  y =(y_0,\ldots,y_L) \in \Alphabet^{L + 1}: \mathbb{Q}_L(y) > 0 \}$.
\end{itemize} 
    The behaviours $\mathcal{B}(\System)$ and $\mathcal{B}_L(\System)$ are naturally equipped with the probability measures $\mathbb{Q}_{\mathcal{B}(\System)}$ and $\mathbb{Q}_{\mathcal{B}_L(\System)}$, as per their definition.
\label{def-behaviour-sys}
\end{definition}

In addition to the concepts above, we provide a notion of metric between two behaviours.
\begin{definition}\label{pseudo-dist}
Given two dynamical systems $\System_{1}$ and $\System_{2}$ with the same set of outputs $\Alphabet$ as in \eqref{eq:System}, and a horizon $h \in \mathbb{N}$,
we define the \emph{metric} $d_h(\System_{1},\System_{2})$ as 
\[
d_h(\System_{1},\System_{2})
:= \mathbb{Q}_{\mathcal{B}_h(\System_1)}(\mathcal{B}_h(\System_1) \setminus \mathcal{B}_h(\System_2))
+ \mathbb{Q}_{\mathcal{B}_h(\System_2)}(\mathcal{B}_h(\System_2) \setminus \mathcal{B}_h(\System_1)).
\]
  
\end{definition}


\subsection{Memory-based Markov chains}


Inspired by the discussion about the behaviour of the dynamical system depicted in Figure \ref{fig:NonMarkovExample}, in this section we formalise the syntax and semantics of a memory-based Markov model, which we employ as a template for the abstractions of the given dynamical system. 


\begin{definition}[Memory-$\ell$ Markov model]
Let $\ell \in \mathbb{N}$ be a natural number and $\Alphabet$ be a finite alphabet. 
A memory-$\ell$ Markov model is the $4$-tuple $\System_{\ell} := \MemoryMC{\ell}$, where $\MemoryState{\ell}$ is a subset of $\Alphabet^\ell$, $\MemoryTransition{\ell}$ is the associated stochastic transition matrix, $\MemoryInitialDist{\ell}$ is the initial state probability, and $\MemoryOutput{\ell}: \MemoryState{\ell} \mapsto \Alphabet$ is the output (or labelling) map defined as $\MemoryOutput{\ell}((y_0,\ldots,y_{\ell-1})) = y_{\ell-1},$ that is, it is the projection onto the last coordinate of elements of $\MemoryState{\ell}$. The semantics of the model is a follows: a path $(y^{(0)},\ldots,y^{(L)})$, where each $y^{(j)} \in \Alphabet^\ell$, $j = 0, \ldots, L$, is an admissible path of size $L+1$ of a memory-$\ell$ Markov model $\MemoryMC{\ell}$ if the following three conditions hold:
\begin{enumerate}
    \item Each $y^{(j)}$ is an element of $\MemoryState{\ell}$ and $y^{(0)}$ is sampled from $\MemoryInitialDist{\ell}$. 
    \item For all $j = 0,\ldots,L-1$, each $y^{(j+1)}$ is obtained from $ y^{(j)}$ by shifting its entries to the left, removing the first element, and inserting an element of $\Alphabet$ into the last, empty entry. 
    \item For all $j = 0,\ldots,L-1$, we have that     $\MemoryTransition{\ell}(y^{(j+1)}~|~y^{(j)}) > 0$,     that is, there is a non-zero probability of transitioning from $y^{(j)}$ to $y^{(j+1)}$.
\end{enumerate}
\label{def-L-Markov-model}
\end{definition}

 Similarly as in Definition \ref{def-behaviour-sys} we denote by $\MemoryBehaviour{\ell}$ the behaviour of a memory-$\ell$ Markov model, which is the collection of all possible outputs that can be observed by running trajectories of the model according to its semantics. The probabilities  $\MemoryProbBehaviour{\ell}$ and $\mathbb{Q}_{\mathcal{B}_h(\System_\ell)}$ are respectively the unique measures on $\MemoryBehaviour{\ell}$ and $\mathcal{B}_h(\System_\ell)$, defined by the transition probability and the initial distribution on words. Details are omitted for brevity. An example of a memory-$2$ Markov model as explained above is depicted on the bottom-right corner in Figure \ref{fig:NonMarkovExample}.

In this work, we will compute the metric defined in Definition~\ref{pseudo-dist} on two Markov models $\Sigma_{\ell_1}$ and $\Sigma_{\ell_2}$ where one is a refinement of the other (if we assume that we have enough samples so as not to miss any trace), it is $d_h(\Sigma_{\ell_1}, \Sigma_{\ell_2})$, where $h > \ell_1 > \ell_2$. In that case, the symmetric difference in the above definition actually only contains one term. This allows us to evaluate the quality of these models as approximations for the original dynamics.

\subsection{Construction and refinement of probabilistic abstractions}
\label{subsec-methodo}
In this subsection we explain in detail our methodology that provides, at every step, an abstract model in the form of a memory-$\ell$ Markov model, obtained by recording the last $\ell$ observations. To refine our proposed abstraction we increase the memory of the model. 



Our technique, which is summarized in Algorithm \ref{algo-methodo}, provides a memory-$\ell$ abstraction for the dynamics in \eqref{eq:System}. It computes the probability $\MemoryTransition{\ell}$ by sampling long trajectories of length $L > \ell$, of the dynamics in \eqref{eq:System}. The entries of $\MemoryTransition{\ell}$ are estimated using the empirical probabilities, i.e., we let
\begin{equation} \label{eq:MemoryTransEstimation}
    \MemoryTransition{\ell}(y^{(2)}~|~y^{(1)}) = \left. \samplings_{y^{(1)}_0 \ldots y^{(1)}_{\ell - 1} y^{(2)}_{\ell - 1}} \, \middle/ \, \samplings_{y^{(1)}}, \right.
\end{equation}
where $y^{(1)} = y^{(1)}_0 \ldots y^{(1)}_{\ell - 1}, y^{(2)} = y^{(1)}_1 \ldots y^{(1)}_{\ell - 2} y^{(2)}_{\ell - 1} \in \Alphabet^\ell$. The symbol $\samplings_{y}$, where $y \in \mathcal{A}^\ell$ for some $\ell \in \mathbb{N}$, represents the number of times the word $y$ appears in a word of size $L > \ell$. Notice also that $y^{(1)}_0 \ldots y^{(1)}_{\ell - 1} y_{\ell - 1}^{(2)} \in \Alphabet^{\ell + 1}$. Additionally, the initial state distribution for the memory-$\ell$ Markov model is defined for all $y \in \Alphabet^\ell$ by 
\begin{equation} \label{eq-initial}
    \MemoryInitialDist{\ell}(y) = \samplings'_{y} \, / \,  N',
\end{equation} 
where $\samplings'_{y}$ is the number of times the word $y$ appears as the $\ell$-long prefix of a $L$-long sample, and $N'$ is the total number of sampled trajectories of length $L.$

In our results below, for the sake of clarity, we assume that we know exactly the conditional probabilities defined above. In practice, one would resort to finite sampling, and thereby would imply an estimation error. There are techniques in order to bound this error as, for instance, in \cite{coppola_et_al_2022}. However, the study of the impact of the sampling error, while certainly of practical importance, is not the focus of the present paper, and we leave it for further work. We formalise this in the next assumption:

\begin{assumptionx} \label{assum-sampling}
For any memory-$\ell$ Markov model $\Sigma_\ell = \MemoryMC{\ell}$, we assume that the transition probability $\MemoryTransition{\ell}$ and the initial distribution $\nu_\ell$ are known exactly.
\end{assumptionx}



An important feature of our approach is the fact that, irrespective of the memory of the model, the resulting Markov chain is only an approximation of the true dynamics. The reason for this relates to our discussion in the introduction of the paper: the original dynamics may require infinite memory to be represented without errors, and we are instead using a finite memory model, which naturally results in approximation errors. Despite this, we will show that under some hypotheses, successive refinements allow to better approximate the behaviour of a dynamical system.  
An important feature of our approach is the fact that, irrespective of the memory of the model, the resulting Markov chain is only an approximation of the true dynamics. The reason for this relates to our discussion in the introduction of the paper: the original dynamics may require infinite memory to be represented without errors, and we are instead using a finite memory model, which naturally results in approximation errors. Despite this, we will show below that under some hypotheses, successive refinements allow to better approximate the behaviour of a dynamical system. This claim will also be supported by the numerical examples to be presented later.  



\begin{algorithm}
\caption{Overall technique (can be iterated for increasing $\ell$)}
\label{algo-methodo}
\begin{enumerate}
\item Fix $\ell = 1,$ a number of samples $N' \gg 1$ and a sampling length $L \gg \ell$
    \item \label{item-memory} Sample $N'$ initial conditions according to initial distribution; simulate $N'$ trajectories of length $L$ 
    \item Create memory-$\ell$ Markov Model (cf. Def. \ref{def-L-Markov-model}) - initial distribution computed by restricting to $\ell$-prefixes, jump probabilities computed considering all subwords of length $\ell+1,$ as per \eqref{eq:MemoryTransEstimation}
    \item If $\ell>1,$ compute distance between models of memory $1,2,\dots, \ell-1$ and current model 
    \item If distance is smaller than a given threshold, then compute the partitioning corresponding to the $\ell$-traces; output memory-$\ell$ model as final model.     
    If not, $\ell:=\ell+1;$ return to item \ref{item-memory} 
\end{enumerate}
\end{algorithm}

	\section{Technical Results} 
\label{sec:results}

We first propose the following elementary proposition, which holds as a direct consequence of Assumption~\ref{assum-sampling}. The proof is left for our extended version due to lack of space.

\begin{proposition} \label{prop-wasserstein}
Consider a dynamical system $\Sigma$ as in \eqref{eq:System}. For any horizon $h \in \mathbb{N}$, consider a Markov model approximation $\Sigma_h$ as in Subsection~\ref{subsec-methodo}. It holds that $d_h(\Sigma_h, \Sigma)= 0$, where $d_h$ is defined as in Definition~\ref{pseudo-dist}.
\end{proposition}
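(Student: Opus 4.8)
The plan is to show that the horizon-$h$ behaviour of the memory-$h$ abstraction $\Sigma_h$ coincides, both as a set and as a probability measure, with the horizon-$h$ behaviour of the concrete system $\Sigma$; the metric $d_h$ then vanishes because both terms in the symmetric difference are measures of the empty set. The key observation is that, at memory $\ell = h$, a path of length $h+1$ in $\Sigma_h$ visits states $y^{(0)}, \ldots, y^{(h)} \in \Alphabet^h$ whose left-shift structure (condition 2 of Definition~\ref{def-L-Markov-model}) forces them to be the successive length-$h$ windows of a single word $(y_0, \ldots, y_{2h-1})$; but the \emph{output} sequence $(\MemoryOutput{h}(y^{(0)}), \ldots, \MemoryOutput{h}(y^{(h)})) = (y_{h-1}, y_h, \ldots, y_{2h-1})$ only exposes $h+1$ consecutive letters, so there is in fact a clean bijection between horizon-$h$ output words of $\Sigma_h$ and words in $\Alphabet^{h+1}$ that are realisable as windows — this is exactly the content needed to match $\mathcal{B}_h(\System)$.

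The steps I would carry out, in order: (i) Unwind the definitions of $\MemoryInitialDist{h}$ and $\MemoryTransition{h}$ from \eqref{eq-initial} and \eqref{eq:MemoryTransEstimation}, and invoke Assumption~\ref{assum-sampling} to replace the empirical counts by the exact probabilities they estimate — i.e. $\MemoryInitialDist{h}(y^{(0)}) = \mathbb{Q}_{h-1}([y_0]\times\cdots\times[y_{h-1}])$ and $\MemoryTransition{h}(y^{(j+1)}\mid y^{(j)})$ equals the conditional probability $\mathbb{Q}(x_{j+h} \in [y_{j+h}] \mid x_{j}\in[y_j], \ldots, x_{j+h-1}\in[y_{j+h-1}])$. (ii) Multiply these along a path and telescope the conditionals, using the Markov/tower structure of $\mathbb{Q}_L$ from Definition~\ref{def-prob-on-states}, to get that the probability the abstraction assigns to a horizon-$h$ path equals $\mathbb{Q}_{2h-1}$ of the corresponding length-$2h$ cylinder, marginalised over the last unseen coordinates; pushing this through the output map $\MemoryOutput{h}$ collapses it to $\mathbb{Q}_h([y_0]\times\cdots\times[y_h])$, which is precisely $\mathbb{Q}_{\mathcal{B}_h(\System)}$ of that word. (iii) Conclude that $\mathcal{B}_h(\System_h) = \mathcal{B}_h(\System)$ as subsets of $\Alphabet^{h+1}$ (a word has positive abstraction-probability iff it has positive $\mathbb{Q}_h$-probability) and that the two probability measures agree, hence both set differences $\mathcal{B}_h(\System_h)\setminus\mathcal{B}_h(\System)$ and its reverse are empty and $d_h(\Sigma_h,\Sigma)=0$.

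The main obstacle is bookkeeping step (ii): one must be careful that the abstraction's path probability is a joint probability over $h+1$ \emph{states} in $\Alphabet^h$, i.e. a priori over $2h$ letters, whereas $\mathcal{B}_h$ lives over $h+1$ letters, so the telescoping of the transition kernels must be shown to marginalise correctly and not to introduce any spurious conditioning — this is where the left-shift consistency condition and the fact that $\MemoryOutput{h}$ reads only the last coordinate do the real work. A secondary subtlety is the edge effect at the initial step: one should check that $\MemoryInitialDist{h}$, defined via $h$-prefixes of length-$L$ samples, indeed reproduces the marginal $\mathbb{Q}_{h-1}$ on the first $h$ cells, which requires $L \ge h$ (guaranteed since $L \gg \ell$). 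Everything else is a routine chase through Definitions~\ref{def-prob-on-states}--\ref{def-behaviour-sys} and~\ref{def-L-Markov-model}.
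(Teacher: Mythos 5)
The paper defers the proof of this proposition to an extended version, so there is no in-paper argument to compare against; judged on its own terms, your strategy --- show that the memory-$h$ model reproduces the length-$(h+1)$ word distribution of $\Sigma$ exactly, so the two supports coincide and both terms of $d_h$ vanish --- is the natural one and is surely what the authors intend. One simplification worth noting: $d_h$ only measures the symmetric difference of the supports $\mathcal{B}_h(\cdot)$, so you only need the \emph{sets} of positive-probability words to agree; the full measure-matching of your step (ii) is stronger than necessary, though it is the cleanest route to it.

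That said, the two issues you dismiss as ``bookkeeping'' are exactly where the argument can fail, and you leave them unresolved. First, the time alignment. Definition~\ref{def-L-Markov-model} defines $H_h$ as the projection onto the \emph{last} coordinate, so a path of $h+1$ states literally outputs the letters at positions $h-1,\dots,2h-1$ of the underlying $2h$-letter word; marginalising the path probability then produces the law of the outputs at times $h-1$ through $2h-1$, which equals $\mathbb{Q}_h([y_0]\times\cdots\times[y_h])$ only under stationarity. Your step (ii) silently switches to the other semantics (the initial state emits its entire $h$-prefix and each transition appends one letter), under which a length-$(h+1)$ output word is one initial state plus a single transition and the identity $\nu_h(y_0\cdots y_{h-1})\,P_h(y_1\cdots y_h\mid y_0\cdots y_{h-1})=\mathbb{Q}_h([y_0]\times\cdots\times[y_h])$ does hold; you need to commit to that semantics explicitly, since the conclusion is false under the literal one for non-stationary dynamics. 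Second, and more substantively: \eqref{eq:MemoryTransEstimation} defines $P_h$ from counts of occurrences of the $(h+1)$-subword \emph{anywhere} in a length-$L$ trajectory, not at position $0$. Hence $P_h(y_1\cdots y_h\mid y_0\cdots y_{h-1})>0$ only says the word occurs somewhere along trajectories, not that it occurs as a prefix, and for non-stationary dynamics the abstraction can then assign positive probability to a word outside $\mathcal{B}_h(\Sigma)$ --- precisely the spurious-word phenomenon the proposition is meant to exclude. Your step (i) sidesteps this by declaring $P_h$ equal to the time-$0$ conditional probability $\mathbb{Q}(x_h\in[y_h]\mid x_0\in[y_0],\dots,x_{h-1}\in[y_{h-1}])$; that is a charitable reading of Assumption~\ref{assum-sampling} rather than a consequence of \eqref{eq:MemoryTransEstimation}, and it is the one place where the proposition genuinely needs either an argument or an explicit convention.
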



We now present our main result, which provides a justification for the procedure described in Subsection \ref{subsec-methodo} and shows that it converges (in some sense) to a correct description of the infinite behaviour of the concrete system. The result leverages two important notions in dynamical systems theory: \emph{observability} and \emph{ergodicity}. In the result, we restrict our analysis to \emph{deterministic} systems, and leave the derivation of a similar result for stochastic systems to future work. 

\begin{definition}
A deterministic system as in \eqref{eq:System} is \emph{observable} if for any two trajectories $x_0x_1\dots$ and $x'_0x'_1\dots$ such that, for all $i \geq 0$, $H(x_i)=H(x'_i)$, one has that $\lim_{l \to \infty} \|x_l - x'_l\| = 0$.
\end{definition}
\begin{theorem} \label{main_theorem}
Consider a deterministic system $\Sigma$ as in  \eqref{eq:System}, and the data-driven procedure explained in Subsection \ref{subsec-methodo}, generating successive abstract models $\Sigma_\ell,$ $\ell = 1,2,\dots.$ Suppose that assumption \ref{assum-sampling} holds, and that the transition function $F$ is an observable, continuous transformation of the compact state space $X \subset \mathbb{R}^n$.
Then there exists a function $\epsilon(\ell) >0$ such that $\epsilon(\ell) \to 0$ and a perturbed system $\Sigma_{\epsilon(\ell)} $ that share the same probabilistic behaviour as the abstract model, that is ${\mathcal{B}}(\Sigma_\ell)={\mathcal{B}}(\Sigma_{\epsilon(\ell)})$,
and such that the dynamic equations of $\Sigma_{\epsilon(\ell)}$ are the same as those of the system $\Sigma$ perturbed by some noise $W(x, k)$, that is: 
 \begin{align}\label{eq:sigmaepsilon}
    \System_{\epsilon(\ell)} := \left\{ 
    \begin{matrix}
        x_{k+1} = F(x_k + w_k) \\
        y_k = \OutputMap(x_k) \\  
        x_0 \sim \nu,
    \end{matrix}  
    \right.
\end{align}
for some $w_k \sim W(x_k, k)$ such that $\|w_k\| < \epsilon(\ell)$.

\end{theorem}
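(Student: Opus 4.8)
The plan is to reconstruct, from the abstract memory-$\ell$ Markov model $\Sigma_\ell$, a perturbation $\Sigma_{\epsilon(\ell)}$ of the concrete system whose trajectories are exactly the infinite paths of $\Sigma_\ell$, with a perturbation magnitude that can be made to vanish thanks to observability. First, I would fix the geometric object that ties the abstraction to the state space: for each admissible memory-state $s = (y_0,\dots,y_{\ell-1}) \in \MemoryState{\ell}$, consider the \emph{cylinder set} $C_s = \{x \in X : \OutputMap(F^j(x)) = y_j,\ j = 0,\dots,\ell-1\}$, i.e. the set of concrete points whose first $\ell$ outputs spell the word $s$. By Proposition~\ref{prop-wasserstein} and the construction in Subsection~\ref{subsec-methodo}, the nonempty $C_s$ are exactly the memory-states that appear with positive probability, and a transition $s \to s'$ has $\MemoryTransition{\ell}(s'|s) > 0$ iff $F(C_s) \cap C_{s'} \neq \emptyset$ (up to the prefix-shift compatibility, which is automatic). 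The key point is that continuity of $F$ and compactness of $X$ make each $C_s$ a compact set, and the $\ell$-cylinders refine as $\ell$ grows.

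Next I would quantify how observability forces these cylinders to shrink. Define $\epsilon(\ell) := \max_{s \in \MemoryState{\ell}} \operatorname{diam}(C_s)$. The claim is that $\epsilon(\ell) \to 0$: if not, there is $\delta > 0$ and, for infinitely many $\ell$, points $x^{(\ell)}, \tilde x^{(\ell)} \in C_{s_\ell}$ with $\|x^{(\ell)} - \tilde x^{(\ell)}\| \geq \delta$; by compactness extract convergent subsequences with limits $x^\star, \tilde x^\star$, which by continuity of $F$ and $\OutputMap$ (recall $\OutputMap$ is constant on equivalence classes, i.e. the cylinders are unions of level sets) produce two trajectories of $\Sigma$ that agree on \emph{all} outputs yet do not converge to each other — contradicting observability. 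This is the step I expect to be the main obstacle, because it is where the topological assumptions (compactness, continuity) must be combined carefully with observability, and some care is needed about whether $\OutputMap$ itself must be assumed continuous or whether one only uses that its level sets $[y_j]$ are the relevant measurable pieces; one likely needs the level sets to be closed, or to argue on a full-measure subset so that the behaviour (a positive-probability notion) is unaffected.

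With $\epsilon(\ell)$ in hand, I would build $\Sigma_{\epsilon(\ell)}$ explicitly. Intuitively, when the concrete state is at some $x \in C_s$, the abstract model is free to jump to \emph{any} successor $s'$ with $\MemoryTransition{\ell}(s'|s) > 0$, whereas $F$ sends $x$ to the single cell dictated by $F(x)$; to realise the other admissible successors we perturb $x$ to a nearby point $x + w_k \in C_s$ that actually maps into $C_{s'}$. Such a point exists precisely because $F(C_s) \cap C_{s'} \neq \emptyset$ means some point of $C_s$ maps into $C_{s'}$, and $\operatorname{diam}(C_s) \leq \epsilon(\ell)$ bounds $\|w_k\|$. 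Formalising this, I would let $W(x,k)$ be the (state- and time-dependent) distribution on the finite set of admissible corrections, with weights chosen to match the empirical transition probabilities $\MemoryTransition{\ell}$ and the empirical initial distribution $\MemoryInitialDist{\ell}$ — this is what makes the \emph{probabilistic} behaviours coincide, not just the supports. One then checks by induction on the horizon $h$ that $\mathcal{B}_h(\Sigma_{\epsilon(\ell)}) = \mathcal{B}_h(\Sigma_\ell)$ with matching measures $\mathbb{Q}_h$, using Definitions~\ref{def-prob-on-states}–\ref{def-behaviour-sys} and the semantics in Definition~\ref{def-L-Markov-model}; passing to the limit over $h$ (again appealing to the measure-extension remark after Definition~\ref{def-behaviour-sys}) gives $\mathcal{B}(\Sigma_\ell) = \mathcal{B}(\Sigma_{\epsilon(\ell)})$, which completes the proof. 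The bound $\|w_k\| < \epsilon(\ell)$ then follows since each correction keeps us inside a cylinder of diameter at most $\epsilon(\ell)$, and the strictness can be arranged by taking $\epsilon(\ell)$ slightly larger than the max diameter if needed.
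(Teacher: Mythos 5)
Your overall architecture (output-word cells, diameter $\to 0$ via observability and compactness, perturbations bounded by the cell diameter) mirrors the paper's, but there are two genuine gaps.

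First, your cylinder sets point the wrong way in time. You define $C_s=\{x: H(F^j(x))=y_j,\ j=0,\dots,\ell-1\}$, the set of states whose \emph{future} $\ell$ outputs spell $s$; the paper instead uses $[y_0\dots y_{\ell-1}]=\{x:\exists x_0,\ F^{\ell-1}(x_0)=x,\ H(F^i(x_0))=y_i\}$, the set of states reachable with \emph{past} history $s$ (which is also the natural semantics of a memory-$\ell$ state). This matters: observability as defined only asserts $\lim_{l\to\infty}\|x_l-x'_l\|=0$ for trajectories with identical outputs --- it says nothing about their distance at time $0$. Hence the contradiction you aim for (``two trajectories that agree on all outputs yet do not converge'') does not actually arise: two limit points $x^\star,\tilde x^\star$ at distance $\geq\delta$ whose trajectories agree on all outputs and merge asymptotically are perfectly compatible with observability. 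Concretely, for $F(r,\theta)=(r/2,\theta+\alpha)$ on $[0,1]\times S^1$ with $H$ depending only on $\theta$ through a Sturmian-type partition, the system is observable but your forward cylinders have diameter bounded below by $1$ for every $\ell$; the backward cells do shrink, because their points sit $\ell-1$ steps down trajectories that have already (asymptotically) merged. So $\epsilon(\ell)\to 0$ can fail for your $C_s$, and the argument must be run on the backward cells.

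Second, the probabilistic matching is asserted rather than proved. You say the weights of $W(x,k)$ are ``chosen to match the empirical transition probabilities,'' but you never identify what those empirical probabilities are, nor why a perturbation of size $<\epsilon(\ell)$ per step suffices to reproduce them at \emph{every} step: after one perturbed transition, the conditional law of the state inside the new cell is whatever $F$ pushes forward, and there is no a priori reason it supports the next round of matching. The paper closes this loop by invoking the existence of an invariant measure $\mu$ (continuity of $F$ on compact $X$) and Birkhoff's ergodic theorem to show that the sampled frequencies equal $\mathbb{P}_\mu(F(x)\in[y_1\dots y_\ell]\mid x\in[y_0\dots y_{\ell-1}])$; it then re-conditions the state distribution at each step to $\mu$ restricted to the current cell --- a modification supported inside a single cell, hence realizable with $\|w_k\|<\epsilon(\ell)$ --- so that the transition statistics coincide with those of $\Sigma_\ell$ by induction. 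Without an analogue of this ergodicity step, your construction of $W(x,k)$ is not justified. (A smaller point: $\MemoryTransition{\ell}(s'|s)>0$ corresponds to a positive-measure, not merely non-empty, intersection $F(C_s)\cap C_{s'}$.)
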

\begin{proof}
Our proof relies on the implicit existence of an abstraction of the concrete system $\Sigma$. In this abstraction, the abstract states correspond to the equivalence classes $$ [y_0\dots y_{\ell-1}]:= \{ x: \exists x_0: F^{\ell - 1}(x_0) = x, \, H(F^i(x_0)) = y_i:\ i=0,\dots, \ell - 1 \}, $$
where $F^{i}$ denotes the $i$-th \emph{functional power\footnote{For $i = 0$, $f^0 = \mathrm{id}$, the \textit{identity function}, and for $i > 0$, the $i$-th functional power of some function $f$ is defined inductively as $f^i = f \circ f^{i - 1} = f^{i-1} \circ f$. }.} Let $\epsilon(\ell)$ be the diameter of the largest cell of the memory-$\ell$ Markov model, that is, $\epsilon(\ell)=\max_{y_0,\dots , y_{\ell-1}}\{\mathrm{diam}([y_0\dots y_{\ell-1}])\}$. By observability of $F$ and compactness of $X$, the maximal diameter $\epsilon(\ell)$ tends to zero. 
Moreover, it is well known that since $F$ is continuous on the compact $X$, it admits an invariant measure $\mu$ (see \cite[Theorem 2.1]{viana2016foundations}). We now prove that, by Birkhoff's theorem \cite[Theorem 3.2.3]{viana2016foundations}, and assuming perfect sampling by Assumption \ref{assum-sampling}, the probability on edge $([y_0\dots y_{\ell-1}],[y_1\dots y_{\ell}])$ in model $\Sigma_\ell$ is equal to $\mathbb{P}(x_{k+1}\in [y_1\dots y_{\ell}] \, | \, x_k \in [y_0\dots y_{\ell-1}]),$ where $x_k \sim \mu,$ and $\mu$ is the ergodic measure of $F$.
Indeed, denoting the indicator function
\[
    \chi_{y_0\dots y_{\ell}}(x):=
    \begin{cases}
        1 \quad \mathrm{if } \, \exists x_0: x=F^\ell(x_0)\  \mathrm{ and } \  H(x_0 F(x_0)\dots F^\ell(x_0))=y_0\dots y_{\ell}, \\
        0 \quad \mathrm{ otherwise, }
    \end{cases}
\]
and applying Birkhoff's theorem, we have that 
\begin{align}
        \samplings_{y_0\dots y_{\ell}}/\samplings_{y_0\dots y_{\ell-1}} 
        &= \left. \int_{X}\chi_{[y_0\dots y_{\ell}]}(x) d\mu \, \middle/ \int_{X}\chi_{[y_0\dots y_{\ell-1}]}(x) d\mu \right. \label{1stInt} \\
        &= \left. \int_{x\in[y_0\dots y_{\ell-1}]:H(F(x))=y_\ell} d\mu \, \middle/ \int_{x\in[y_0\dots y_{\ell-1}]} d\mu \right. \label{2ndInt} \\
        &= \mathbb{P}_\mu(F(x)\in [y_1\dots y_{\ell}]\, | \,  x \in [y_0\dots y_{\ell-1}]). \label{3rdInt}
\end{align}
Equation~\eqref{1stInt} above follows from the application of Birkhoff's theorem (twice), Equation~\eqref{2ndInt} follows from the invariance of the measure $\mu,$ and Equation~\eqref{3rdInt} is the definition of conditional probability.

We now claim that we can modify the initial probability distribution $\mathbb P_0$ such that the concrete system behaves as our model (we will then iterate the same argument for times $k>1$). Consider any probability distribution $\mathbb P_0,$ we show that one can build a probability distribution $\mathbb P'_0$ such that $\mathbb P'_0([y_0\dots y_{\ell-1}])=\mathbb P_0([y_0\dots y_{\ell-1}]),$ and such that $\mathbb P'_0(x \, | \,x \in [y_0\dots y_{\ell-1}])=\mathbb P_{\mu}(x \,| \, x \in [y_0\dots y_{\ell-1}]).$
This new distribution is defined over $[y_0\dots y_{\ell - 1}]$ as follows: $$\mathbb P'_0(x)=\mu(x) \frac{\mathbb P([y_0\dots y_{\ell-1}])\,  }{\, \mu ([y_0\dots y_{\ell-1}])}.$$

Moreover, since $\mathbb P'_0([y_0\dots y_{\ell-1}])=\mathbb P_0([y_0\dots y_{\ell-1}]),$ one can express $x'\sim \mathbb P'_0$ as $x'=x + w,$ where $x \sim \mathbb P_0$ and $w \sim W(x, 0),$ and $W(x, 0)$ has support of diameter $\epsilon(\ell)$ (because $W(x, 0)$ perturbs $\mathbb P_0$ in the cell to which $x$ belongs). Now, the push-forward measure $ \mathbb P_1 := \mathbb P'_0(F^{-1}(x))$ will not, in general, be equal to $\mu$. However, we can reiterate the construction above and provide a perturbation $\mathbb P'_1 $ such that $\mathbb P'_1 ([y_0\dots y_{\ell-1}])=\mathbb P_1([y_0\dots y_{\ell-1}]),$ and such that $\mathbb P'_1(x\, | \, x \in [y_0\dots y_{\ell-1}]) = \mathbb P_{\mu}(x\, | \, x \in [y_0\dots y_{\ell-1}]).$ Again, $\mathbb P'_1$ can be achieved by a perturbation $w \sim W(x, 1)$ such that $w < \epsilon(\ell),$ and the proof is concluded by induction.
\end{proof}





	\section{Experiments}
\label{sec:experiments}


For a fixed dynamical system $\System$, experiments are set up as follows. For successive values of $\ell$, we compute the associated memory-$\ell$ Markov model $\Sigma_\ell = \MemoryMC{\ell}$, as explained in subsection~\ref{subsec-methodo}. We also fix a horizon $h > \ell$, for which we compute the corresponding memory-$h$ Markov model $\Sigma_h = \MemoryMC{h}$. First, for each memory-$\ell$ model, we compute their metric as defined in Definition~\ref{pseudo-dist} with respect to the memory-$h$ model, that is, $d_h(\System_\ell, \System_h)$. This measure is a probabilistic representation of the quality of the memory-$\ell$ model with respect to $\mathcal{B}_h(\System)$, the $h$-step behaviour of the true system, which we use as a proxy for $\mathcal{B}(\System)$. Second, for each pair of memory-$\ell$ and memory-$(\ell + 1)$ models, we compute their metric with respect to the same horizon $h$, namely $d_h(\System_\ell, \System_{\ell + 1})$. This second measure can be effectively computed in practice, and this distance between models $\ell$ and $\ell + 1$ allows us to estimate how close our approximations are to convergence. In our experiments, we then verify this by comparing $\mathcal{B}_h(\Sigma_\ell)$ with $\mathcal{B}_h{\Sigma}$ (which might not be available in practical applications).

We begin by considering the system generating \emph{Sturmian words} \citep{fogg2002substitutions}. 

\begin{example}[Deterministic dynamical system] \label{sturmian_example}
A \emph{sturmian system} is a deterministic system defined on the state-space $[0, 2\pi) \subset \mathbb{R}$ where the next state is defined as 
\begin{equation} \label{eq:SturmianTrans}
    x_{k+1} = F(x_k) = x_k + \theta \mod 2\pi, 
\end{equation}
for some irrational angle $\theta$ and where the output is $y_{k} = H(x_k)$, where $H(x) = 0$ if $x \in [0, \theta)$ and $H(x) = 1$ otherwise. An illustration of the Sturmian dynamics is provided in Figure~\ref{fig:sturmian}. In the formalism introduced in \eqref{eq:System}, the alphabet is $\Alphabet = \{0, 1\}$.
\end{example}

\begin{figure}[h!]
  \begin{minipage}[c]{0.4\textwidth}
    \includegraphics[width=0.8\textwidth]{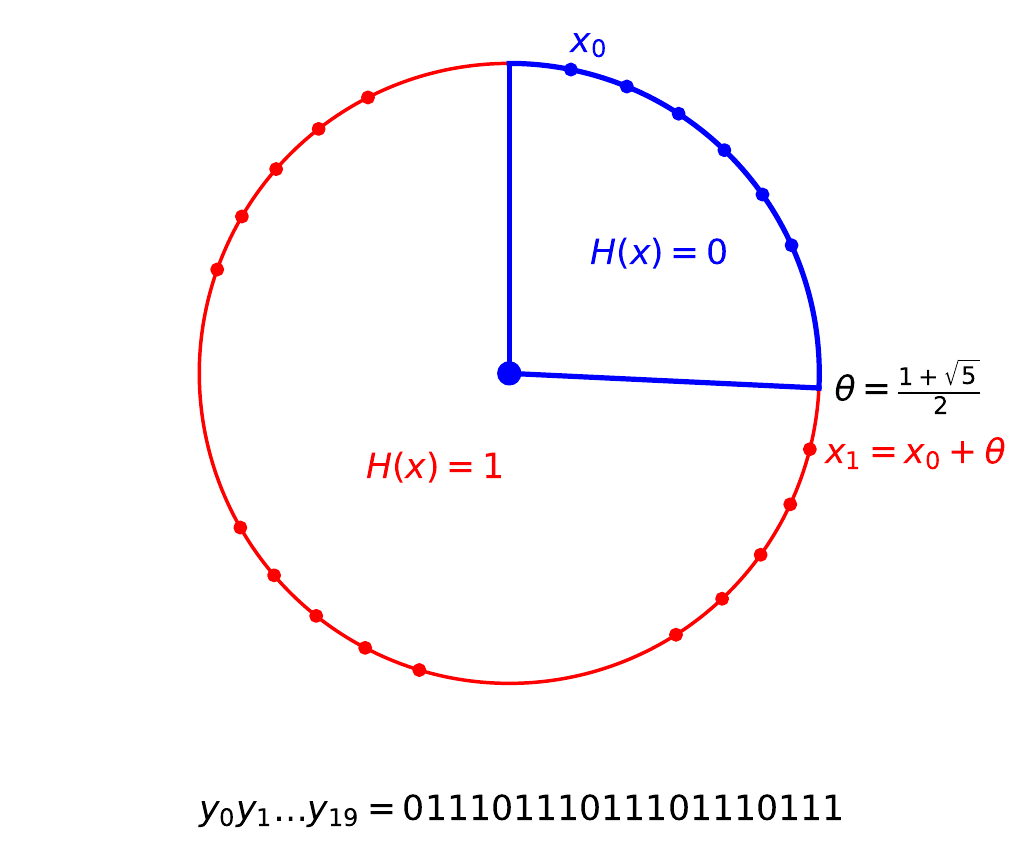}
  \end{minipage}\hfill
  \begin{minipage}[c]{0.6\textwidth}
    \captionof{figure}{Illustration of the Sturmian dynamical system (see Example~\ref{sturmian_example}). The initial state $x_0 \in [0, 2\pi)$ lies in $[0, \theta)$. Therefore, it has an output $y_0 = H(x_0) = 0$. The next state $x_1 = x_0 + \theta$ leaves $[0, \theta)$ and lies in $[\theta, 2\pi)$. Therefore it has an output $y_1 = H(x_1) = 1$. The  first 20 states and associated outputs are shown.}
    \label{fig:sturmian}
  \end{minipage}
\end{figure}

We also consider a system of different nature, namely endowed with switching and stochastic behaviour, which has been studied in \cite{DETTMANN2020104737, stanfordExample}.

\begin{example}[Stochastic switched system] \label{switched_example}
Consider a switched system with two modes defined on the state-space $\mathbb{R}^2$, where the next state is defined as $x_{k+1} = F_{\sigma_k}(x_k)$, where $\sigma_k \in \{1, 2\}$ and the maps $F_i : \mathbb{R}^{2} \to \mathbb{R}^2$ are linear maps $F_i(x) = A_ix$ for two matrices $A_1, A_2 \in \mathbb{R}^{2 \times 2}$ defined as
\[
A_1 = \begin{pmatrix} \cos(\pi/6) & \sin(\pi/6) \\ -\sin(\pi/6) & \cos(\pi/6) \end{pmatrix} 
\quad \text{and} \quad
A_2 = \begin{pmatrix} 1.02 & 0 \\ 0 & 1/2 \end{pmatrix}.
\]
Suppose in addition that, at each time step, there is a fair probability (equal to $1/2$) to switch to either mode.  
In the formalism introduced in \eqref{eq:System}, the stochastic kernel $\StochasticKernel(\cdot | x_k)$ is defined as\footnote{$\delta_A$ is the Dirac function, it is $\delta_A(x) = 1$ if $x \in A$, and $\delta_A(x) = 0$ otherwise.} 
\begin{equation} \label{eq:switchedStochasticKernel}
   \StochasticKernel(\cdot | \cdot): \mathbb{R}^2 \to \mathcal{P}(\mathbb{R}^2) 
    : x_k \mapsto T(\cdot | x_k) = \frac{1}{2}\delta_{\{F_1(x_k)\}} + \frac{1}{2}\delta_{\{F_2(x_k)\}}. 
\end{equation}
It is not clear whether it is possible to obtain a bi-simulation with classical refinement techniques, and thus we wish to obtain a non-trivial abstraction thanks to the data-driven approach explained in Section~\ref{subsec-methodo}. For this reason, we propose a first rough partition of the state space. The alphabet $\Alphabet = \{0, 1, \dots, 8\}$ and the output function $H$ define a partitioning of the state-space as illustrated in the right part of Figure~\ref{fig:switched}. Together with the output, three trajectories of length 20 are represented in the left of Figure~\ref{fig:switched}.
\end{example}

\begin{figure}[h!]
  \begin{minipage}[c]{0.65\textwidth}
    \includegraphics[width=1\textwidth]{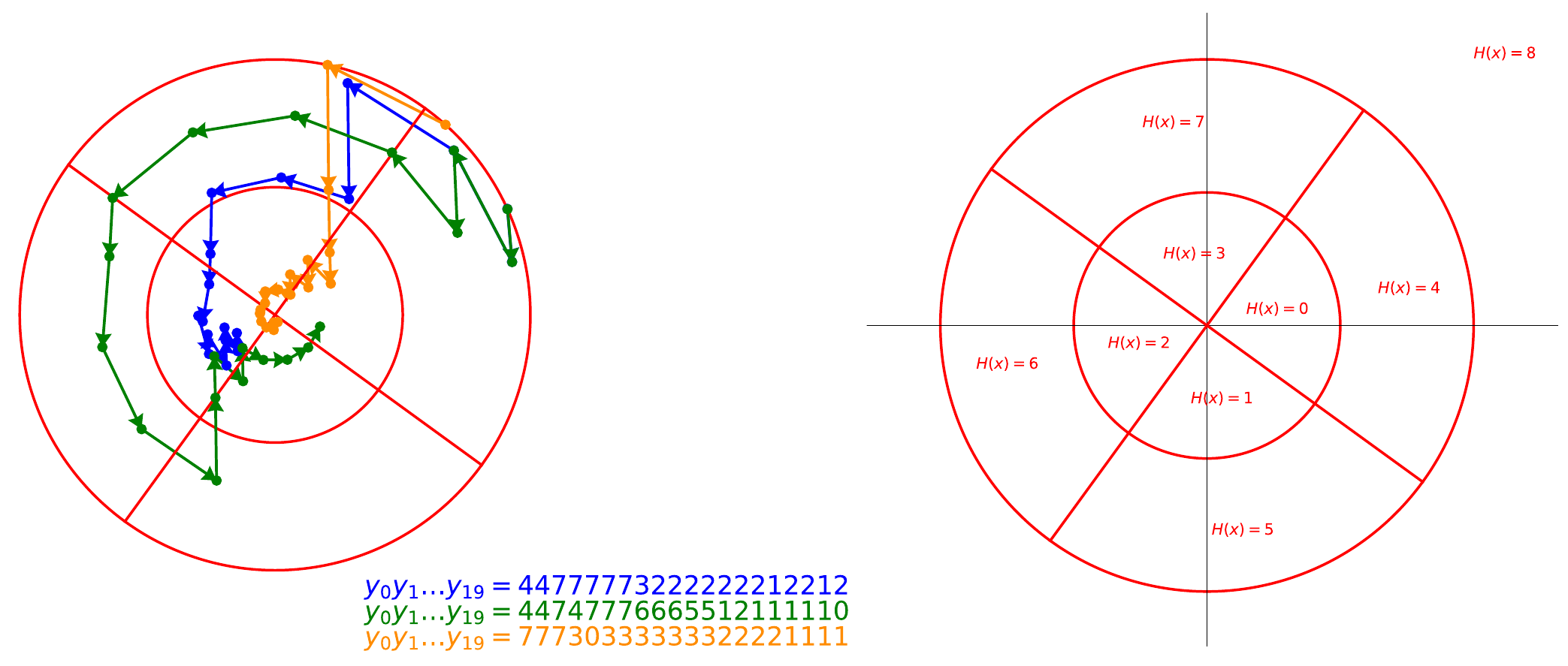}
  \end{minipage}\hfill
  \begin{minipage}[c]{0.33\textwidth}
    \captionof{figure}{Illustration of the stochastic switched system in Example~\ref{switched_example}. (Right) The output map $\OutputMap$ for this system, with circles of radius 1 and 2. (Left) Three different trajectories sampled from the stochastic kernel  $\StochasticKernel(\cdot | x_k)$ are illustrated, and their output reported in like colour.}
    \label{fig:switched}
  \end{minipage}
\end{figure}


In these examples, we assume that one knows the closed-form description of the systems, but would like to find an abstraction of them.  
Results of multiple executions of the algorithm described above for Example~\ref{sturmian_example} and Example~\ref{switched_example} can respectively be found in Figure~\ref{fig:experiment_sturmian} and Figure~\ref{fig:experiment_switched}. One can observe in these figures the red curve $d_h(\System_{\ell}, \System_{\ell + 1})$, which we can compute in practice, and the blue curve $d_h(\System_{\ell}, \System_{h})$, which shows that the successive models indeed converge to the concrete model in terms of their behaviours for the (large) horizon $h$. This suggests a heuristic argument that, using the method described in Algorithm~\ref{algo-methodo}, one can infer the probabilistic precision of the memory-$\ell$ abstract Markov model with any horizon $h$. Moreover, in Figure~\ref{fig:abstraction_sturmian} and Figure~\ref{fig:abstraction_switched}, we display how in practice we can automatically build non-trivial abstractions of the concrete models. Observe that the method works well even for Example~\ref{switched_example}, which does not satisfy all assumptions of Theorem~\ref{main_theorem}. 

The generated abstract models can be further used to perform analysis or verification on the initial system, leveraging information from the probabilistic behaviour of transitions between abstract cells. This goal requires proper handling of the results in Theorem~\ref{main_theorem}, and is left to future work. 

\begin{figure}[h!]
\centering
\begin{minipage}{.45\textwidth}
  \centering
  \includegraphics[width=.9\linewidth]{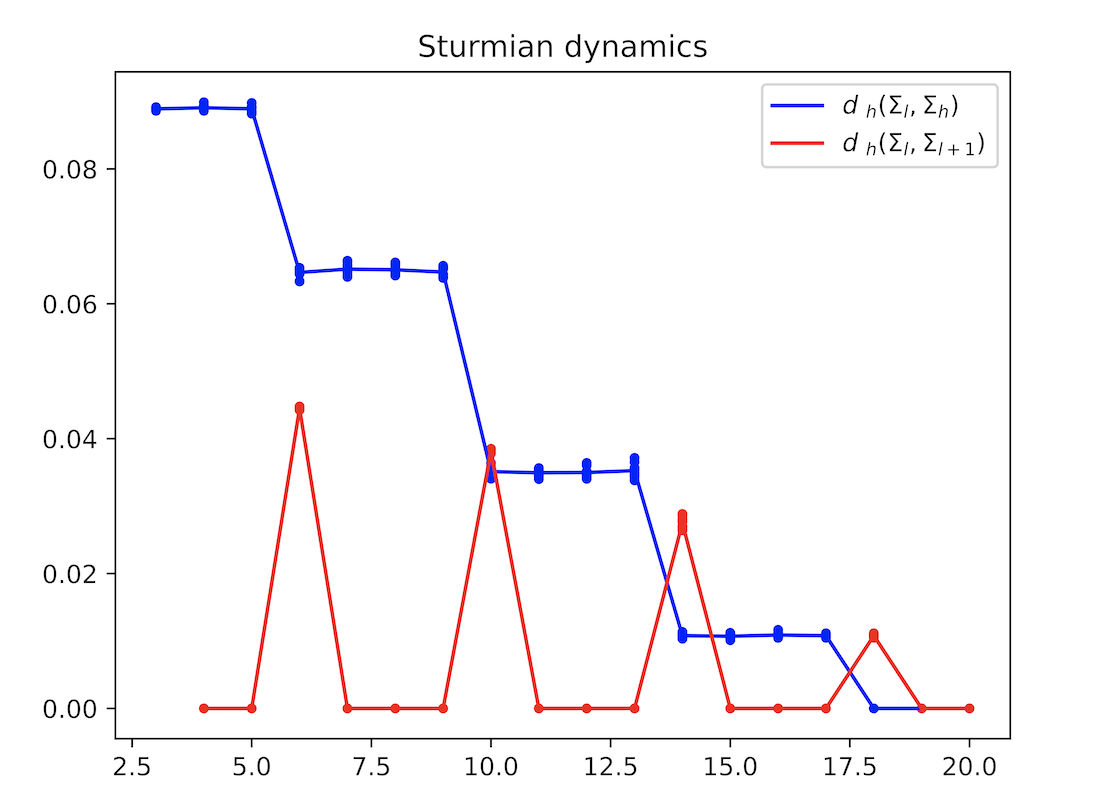}
  \captionof{figure}{Abstraction results for Example~\ref{sturmian_example}. Averages obtained from executing the algorithm 10 times.}
  \label{fig:experiment_sturmian}
\end{minipage}%
\hfill
\begin{minipage}{.45\textwidth}
  \centering
  \includegraphics[width=.9\linewidth]{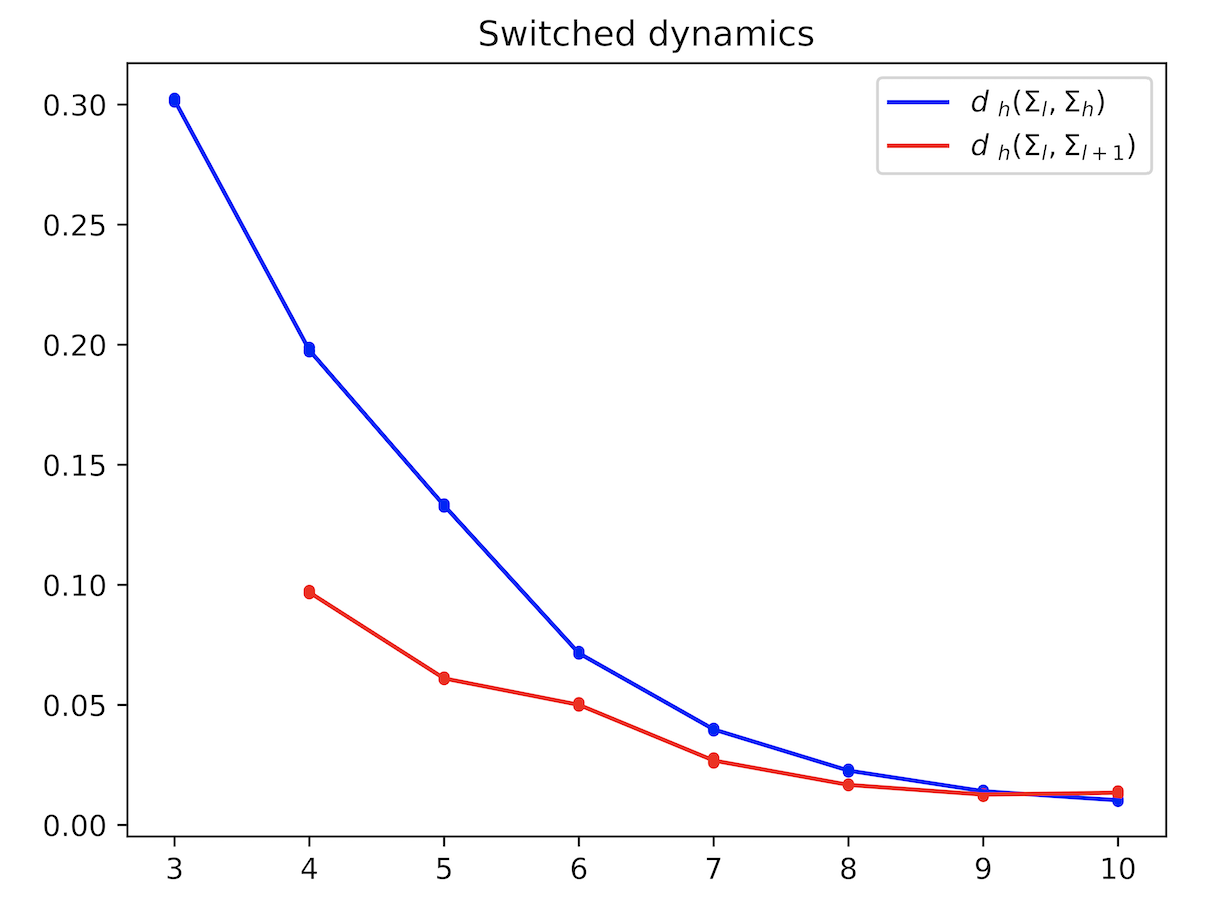}
  \captionof{figure}{Abstraction results for Example~\ref{switched_example}. Averages obtained from executing the algorithm 5 times. }
  \label{fig:experiment_switched}
\end{minipage}
\\
\begin{minipage}{.45\textwidth}
  \centering
  \includegraphics[width=.9\linewidth]{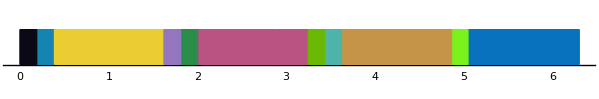}
  \captionof{figure}{State-space partitioning generated by the algorithm for the abstraction built for Example~\ref{sturmian_example}, for $\ell = 10$.}
  \label{fig:abstraction_sturmian}
\end{minipage}%
\hfill
\begin{minipage}{.45\textwidth}
  \centering
  \includegraphics[width=.8\linewidth]{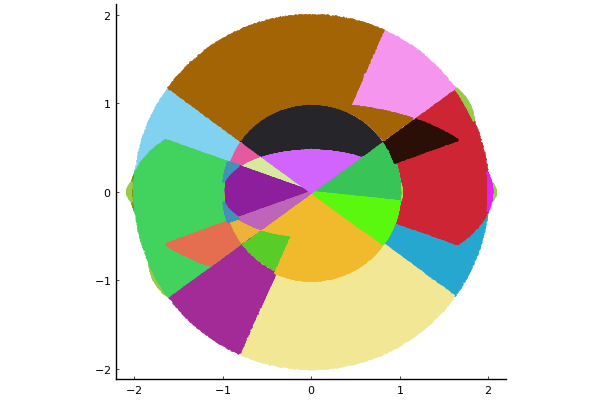}
  \captionof{figure}{State-space partitioning generated by the algorithm for the abstraction built for Example~\ref{switched_example}, for $\ell = 2$.}
  \label{fig:abstraction_switched}
\end{minipage}
\end{figure}


	\section{Conclusions}
\label{sec:conclusion}

In this work, we have proposed a new approach to build data-driven abstraction of rather general dynamical systems.  We approximate the concrete system with a Markov model, thus aggregating the (aleatoric and) epistemic nondeterminism of the given model in the exclusively aleatoric uncertainty of the abstract stochastic model.  


This technique can be expanded in many directions, both theoretical and practical: by making our computations more efficient,  
by leveraging the obtained abstraction as an actionable symbolic model, by adding control inputs, or by relaxing or removing some of the raised  assumptions. 
We finally note that, as done recently in a non-probabilistic setting \citep{yang2020refinements}, one could push this methodology further and refine only certain memory-states, rather than increasing the memory level uniformly from $\ell$ to $\ell + 1$: we leave this amelioration to further work. 


	\bibliographystyle{ieeetr}
	\bibliography{library.bib, otherbibs.bib}

\end{document}